\documentclass[11pt]{article}
\def\comments{0}

\usepackage[utf8]{inputenc}
\usepackage{graphicx}
\usepackage{amsmath, bbm}
\usepackage{amsfonts}
\usepackage{amsthm}
\usepackage[margin=1.0in]{geometry}
\usepackage[ruled,vlined]{algorithm2e}
	\SetKwFor{While}{While}{}{}
	\SetKwFor{For}{For}{}{}
	\SetKwIF{If}{ElseIf}{Else}{If}{}{Else If}{Else}{}
	\SetKw{Return}{Return}
\usepackage{algorithmicx}
\usepackage[english]{babel}
\usepackage{microtype}
\usepackage{verbatim}
\usepackage{wrapfig}
\usepackage{authblk}

\usepackage[dvipsnames]{xcolor}
\usepackage{hyperref}

\usepackage{caption,subcaption}
\theoremstyle{definition}

\newtheorem{definition}{Definition}[section]
\newtheorem{theorem}{Theorem}
\newtheorem{lemma}[theorem]{Lemma}

\DeclareMathOperator*{\argmax}{arg\,max}
\DeclareMathOperator*{\argmin}{arg\,min}

\let\originalleft\left
\let\originalright\right
\renewcommand{\left}{\mathopen{}\mathclose\bgroup\originalleft}
\renewcommand{\right}{\aftergroup\egroup\originalright}

\renewcommand{\epsilon}{\varepsilon}
\newcommand{\eps}{\varepsilon}
\newcommand{\epsth}{\eps_{\mathit{th}}}
\newcommand{\epslb}{\eps_{\mathit{LB}}}
\newcommand{\epsopt}{\eps_{OPT}}
\renewcommand{\log}{\ln}

\ifnum\comments=1
    \setlength\marginparwidth{90pt}
    \newcommand{\matthew}[1]{\marginpar{\tiny\color{orange}{MJ: #1}}}
    \newcommand{\jon}[1]{\marginpar{\tiny\color{blue}{JU: #1}}}
    \newcommand{\alina}[1]{\marginpar{\tiny\color{purple}{AO: #1}}}
\else
    \newcommand{\matthew}[1]{}
    \newcommand{\jon}[1]{}
    \newcommand{\alina}[1]{}
\fi

\newcommand{\mypar}[1]{\medskip\noindent\textbf{#1}}

\title{Auditing Differentially Private Machine Learning: \\ How Private is Private SGD?\thanks{Authors ordered by contribution.}}

\author{Matthew Jagielski}
\author{Jonathan Ullman}
\author{Alina Oprea}
\affil{Khoury College of Computer Sciences, Northeastern University}
\date{}


\begin{document}
\maketitle

\begin{abstract}
We investigate whether Differentially Private SGD offers better privacy in practice than what is guaranteed by its state-of-the-art analysis.  We do so via novel data poisoning attacks, which we show correspond to realistic privacy attacks.  While previous work (Ma et al., arXiv 2019) proposed this connection between differential privacy and data poisoning as a defense against data poisoning, our use as a tool for understanding the privacy of a specific mechanism is new.  More generally, our work takes a quantitative, empirical approach to understanding the privacy afforded by specific implementations of differentially private algorithms that we believe has the potential to complement and influence analytical work on differential privacy. An open-source implementation of our algorithms can be found at \url{https://github.com/jagielski/auditing-dpsgd}.
\end{abstract}

\section{Introduction}
Differential privacy~\cite{DworkMNS06} has become the de facto standard for guaranteeing privacy in machine learning and statistical analysis, and is now being deployed by many organizations including Apple~\cite{AppleDP}, Google~\cite{ErlingssonPK14,Bittau+17,Papernot+18}, and the US Census Bureau~\cite{Haney+17}.  Now that differential privacy has moved from theory to practice, there has been considerable attention on optimizing and evaluating differentially private machine learning algorithms, notably differentially private stochastic gradient descent (henceforth, DP-SGD)~\cite{SongCS13,BassilyST14,Abadi+16}, which is now widely available in TensorFlow Privacy~\cite{TFP}. DP-SGD is the building block for training many widely used private classification models, including feed-forward and convolutional neural networks.

Differential privacy gives a strong \emph{worst-case} guarantee of individual privacy: a differentially private algorithm ensures that, for any set of training examples, no attacker, no matter how powerful, can learn much more information about a single training example than they could have learned had that example been excluded from the training data.  The amount of information is quantified by a \emph{privacy parameter} $\eps$.\footnote{There are several common variants of differential privacy~\cite{DworkKMMN06,DworkR16,BunS16,Mironov17,BunDRS18,DongRS19} that quantify the influence of a single example in slightly different ways, sometimes using more than one parameter.  For this high-level discussion, we focus on the single, primary parameter $\eps$.}  Intuitively, a smaller $\eps$ means stronger privacy protections, but leads to lower accuracy.  As such there is often pressure to set this parameter as large as one feels still gives a reasonable privacy guarantee, and relatively large parameters such as $\eps = 2$ are not uncommon.  However, this guarantee is not entirely satisfying, as such an algorithm might allow an attacker to guess a random bit of information about each training example with approximately 86\% accuracy.  As such there is often a gap between the strong formal protections promised by differential privacy and the specific quantitative implications of the choice of $\eps$ in practice.

This state-of-affairs is often justified by the fact that our analysis of the algorithm is often pessimistic.  First of all, $\eps$ is a parameter that has to be determined by careful analysis, and often existing theoretical analysis is not tight.  Indeed a big part of making differentially private machine learning practical has been the significant body of work giving progressively more refined privacy analyses specifically for DP-SGD~\cite{Abadi+16,DongRS19,MironovTZ19,YuLPGT19}, and for all we know these bounds on $\eps$ will continue to shrink.  Indeed, it is provably intractable to determine the tightest bound on $\eps$ for a given algorithm~\cite{GilbertM18}. Second, differential privacy is a worst-case notion, as the mechanism might have stronger privacy guarantees on realistic datasets and realistic attackers.  Although it is plausible that differentially private algorithms with large values of $\eps$ provide strong privacy in practice, it is far from certain, which makes it difficult to understand the appropriate value of $\eps$ for practical deployments.

\subsection{Our Contributions}
\mypar{Auditing DP-SGD.} In this paper we investigate the extent to which DP-SGD,\footnote{Although our methods are general, in this work we exclusively study the implementation and privacy analysis of DP-SGD in TensorFlow Privacy~\cite{TFP}.} does or does not give better privacy in practice than what its current theoretical analysis suggests. We do so using novel \emph{data poisoning attacks}.  Specifically, our method starts with a dataset $D$ of interest (e.g.~Fashion-MNIST) and some algorithm $\mathcal{A}$ (e.g.~DP-SGD with a specific setting of hyperparameters), and produces a small poisoning set $S$ of $k$ points and a binary classifier $T$ such that $T$ distinguishes the distribution $\mathcal{A}(D)$ from $\mathcal{A}(D \cup S)$ with significant advantage over random guessing.  If $\mathcal{A}$ were $\eps$-DP, then $T$ could have accuracy at most $\exp(\eps k) / (1+\exp(\eps k)) $, so if we can estimate the accuracy of $T$ we can infer a lower bound on $\eps$.
While previous work~\cite{ma2019data} proposed to use this connection between differential privacy and data poisoning as a defense against data poisoning, our use in this context of auditing the privacy of DP-SGD is new.

Specifically, for certain natural choices of hyperparameters in DP-SGD, and standard benchmark datasets (see Figure~\ref{fig:images}), our attacks give lower bounds on $\eps$ that are approximately 10x better than what we could obtain from previous methods, and are within approximately 10x of the worst-case, analytically derived upper bound. 
For context, previous theoretical improvements to the analysis have improved the worst-case upper bounds by factors of more than 1000x over the na\"ive analysis, and thus our results show that we cannot hope for similarly dramatic gains in the future.

\mypar{Novel Data Poisoning Attacks.}
We find that existing data poisoning attacks, as well as membership inference attacks proposed by prior work, have poor or trivial performance not only against DP-SGD, but even against SGD with gradient clipping (i.e.~rescaling gradients to have norm no larger than some $C$).  Gradient clipping is an important part of DP-SGD, but does not provide any formal privacy guarantees on its own.  Thus, we develop a novel data poisoning attack that is more robust to gradient clipping, and also performs much better against DP-SGD.



Intuitively, data poisoning attacks introduce new points whose gradients will change the model in a certain direction, and the attack impact increases when adding poisoning points of larger gradients.  Existing attacks modify the model in a random direction, and have to push far enough that the original distribution on model parameters and the new distribution become distinguishable.  To be effective, these attacks use points which induce large gradients, making the attack sensitive to gradient clipping. On the other hand, our attack improves by finding the direction where the model parameters have the lowest variance, and select poisoning points that modify the model in that direction.  Therefore, we achieve the same effect of model poisoning with poisoning points of smaller gradients, thereby making the attack more robust to clipping.

\mypar{The Role of Auditing in DP.} More generally, our work takes a quantitative, empirical approach to \emph{auditing} the privacy afforded by specific implementations of differentially private algorithms.  We do not advocate trying to definitively measure privacy of an algorithm empirically, since it's hopeless to try to anticipate all future attacks.  Rather, we believe this empirical approach has the potential to complement and influence analytical work on differential privacy, somewhat analogous to the way \emph{cryptanalysis} informs the design and deployment of \emph{cryptography}.

Specifically, we believe this approach can complement the theory in several ways: 
\begin{itemize}
    \item Most directly, by advancing the state-of-art in privacy attacks, we can either demonstrate that a given algorithm with a given choice of parameters is not sufficiently private, or give some confidence that it might be sufficiently private.
    \item Establishing strong lower bounds on $\eps$ gives a sense of how much more one could hope to get out of tightening the existing privacy analysis.
    \item Observing how the performance of the attack depends on different datasets, hyperparameters, and variants of the algorithm can identify promising new phenomena to explore theoretically.
    \item Producing concrete privacy violations can help non-experts interpret the concrete implications of specific choices of the privacy parameter.
\end{itemize}

\subsection{Related Work}

\mypar{DP-SGD.} Differentially private SGD was introduced in~\cite{SongCS13}, and an asymptotically optimal analysis of its privacy properties was given in~\cite{BassilyST14}.  Notably Abadi et al.~\cite{Abadi+16} gave greatly improved concrete bounds on its privacy parameter, and showed its practicality for training neural networks, making DP-SGD one of the most promising methods for practical private machine learning.  There have been several subsequent efforts to refine the privacy analysis of this specific algorithm~\cite{MironovTZ19,DongRS19,YuLPGT19}.  A recent work~\cite{HylandT19} gave a heuristic argument that SGD \emph{itself} (without adding noise to the gradients) satisfies differential privacy, but even then the bounds on $\eps$ are quite large (e.g.~$\eps = 13.01$) for most datasets. 



\mypar{Privacy Attacks.}
Although privacy attacks have a very long history, the history of privacy attacks against aggregate statistical information, such as machine learning models, goes back to the seminal work of Dinur and Nissim~\cite{DinurN03} on \emph{reconstruction attacks}.  A similar, but easier to implement type of attack, \emph{membership inference attacks}, was first performed by Homer et al.~\cite{homer2008resolving}, and theoretically analyzed in~\cite{SankararamanOJH09,DworkSSUV15}.  Shokri et al.~\cite{shokri2017membership} and Yeom et al.~\cite{yeom2018privacy} gave black-box membership inference algorithms for complex machine learning models.  Membership inference attacks are compelling because they require relatively weak assumptions, but, as we show, state-of-the-art membership inference attacks lead to quantitatively weak privacy violations.

More directly related to our work, privacy attacks were recently used by Jayaraman and Evans~\cite{jayaraman2019evaluating} to understand the concrete privacy leakage from differentially private machine learning algorithms, specifically DP-SGD.  However, the goal of their work is to compare the privacy guarantees offered by different variants of differential privacy, rather than to determine the level of privacy afforded by a given algorithm.  As such, their attacks are quantitatively much less powerful than ours (as we show in Figure~\ref{fig:images}), and are much further from determining the precise privacy guarantees of DP-SGD. 



\mypar{Differential Privacy and Data Poisoning.}
Ma et al.~\cite{ma2019data} and Hong et al.~\cite{hong2020effectiveness} evaluate the effectiveness of data poisoning attacks on differentially private machine learning algorithms. Ma et al. consider both the output perturbation and objective perturbation algorithms for learning ridge regression and logistic regression models, proposing attacks on differentially private algorithms, and also argue that differentially privacy can serve as a defense for poisoning attacks. Hong et al. propose differential privacy as a defense for poisoning attacks, showing that DP-SGD performs effectively at defending against existing poisoning attacks in the literature.  While differential privacy provides a provable defense for poisoning attacks, our intuition is that the strong poisoning attacks we design allow us to measure a lower bound on the privacy offered by differentially private algorithms.



\mypar{Automated Discovery of Privacy Parameters.}  Two works have focused on automatically discovering (upper or lower bounds on) privacy parameters.  \cite{GilbertM18} showed that determining the exact privacy level using black-box access to the algorithm is prohibitively expensive.  In the white-box setting, Ding et al.~\cite{DingWWZK18} used a clever combination of program analysis and random sampling to identify violations of $\eps$-DP, although their methods are currently limited to simple algorithms.  Moreover the violations of DP they identify may not correspond to realistic attacks.

\section{(Measuring) Differential Privacy}
\subsection{Differential Privacy Background}
We begin by outlining differential privacy and one of its relevant properties: group privacy. We consider machine learning classification tasks, where a dataset consists of many samples from some domain $\mathcal{D}=\mathcal{X}\times\mathcal{Y}$, where $\mathcal{X}$ is the feature domain, and $\mathcal{Y}$ the label domain. We say two datasets $D_0, D_1$ differ on $k$ rows if we can replace at most $k$ elements from $D_0$ to produce $D_1$.

\begin{definition}[\cite{DworkMNS06}]
An algorithm $\mathcal{A}: \mathcal{D}\mapsto\mathcal{R}$ is $(\eps, \delta)$-\emph{differentially private} if for any two datasets $D_0, D_1$ which differ on at most one row, and every set of outputs $\mathcal{O} \subseteq \mathcal{R}$:
\begin{equation}
\Pr[\mathcal{A}(D_0)\in \mathcal{O}]\le e^{\eps}\Pr[\mathcal{A}(D_1)\in\mathcal{O}]+\delta,
\label{eq:dp}
\end{equation}
where the probabilities are taken only over the randomness of $\mathcal{A}$.
\end{definition}

\begin{lemma}[Group Privacy]
Let $D_0, D_1$ be two datasets differing on at most $k$ rows, $\mathcal{A}$ is an $(\eps, \delta)$-differentially private algorithm, and $\mathcal{O}$ an arbitrary output set. Then
\begin{equation}
\Pr[\mathcal{A}(D_0)\in \mathcal{O}]\le e^{k\eps}\Pr[\mathcal{A}(D_1)\in\mathcal{O}]+ \tfrac{e^{k\eps}-1}{e^{\eps}-1} \cdot \delta.
\label{eq:group}
\end{equation}
\end{lemma}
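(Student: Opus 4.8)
The plan is to prove the statement by induction on $k$, reducing the group case to (at most) $k$ applications of the single-row guarantee~\eqref{eq:dp} along a chain of intermediate datasets. First I would dispense with the base cases: for $k=0$ the two datasets are equal and the bound is trivial, under the convention that the empty geometric sum $\tfrac{e^{0}-1}{e^\eps-1}$ equals $0$; and for $k=1$ the inequality~\eqref{eq:group} is exactly the definition of $(\eps,\delta)$-differential privacy.

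For the inductive step, assume the claim holds for datasets differing on at most $k-1$ rows, and let $D_0,D_1$ differ on at most $k$ rows. Using the definition of \emph{differing on $k$ rows} — i.e.\ $D_1$ is obtained from $D_0$ by replacing at most $k$ elements — perform those replacements one at a time and let $E$ denote the dataset obtained after the first $k-1$ of them. Then $D_0$ and $E$ differ on at most $k-1$ rows, while $E$ and $D_1$ differ on at most one row. Applying the inductive hypothesis to the pair $(D_0,E)$ and then~\eqref{eq:dp} to the pair $(E,D_1)$ yields
\begin{align*}
\Pr[\mathcal{A}(D_0)\in\mathcal{O}]
&\le e^{(k-1)\eps}\Pr[\mathcal{A}(E)\in\mathcal{O}] + \tfrac{e^{(k-1)\eps}-1}{e^\eps-1}\,\delta \\
&\le e^{(k-1)\eps}\left(e^{\eps}\Pr[\mathcal{A}(D_1)\in\mathcal{O}] + \delta\right) + \tfrac{e^{(k-1)\eps}-1}{e^\eps-1}\,\delta \\
&= e^{k\eps}\Pr[\mathcal{A}(D_1)\in\mathcal{O}] + \left(e^{(k-1)\eps} + \tfrac{e^{(k-1)\eps}-1}{e^\eps-1}\right)\delta.
\end{align*}
It then remains to simplify the coefficient of $\delta$: since $e^{(k-1)\eps} = \tfrac{e^{k\eps}-e^{(k-1)\eps}}{e^\eps-1}$, adding the two terms telescopes to $\tfrac{e^{k\eps}-1}{e^\eps-1}$, which is precisely the bound in~\eqref{eq:group}. (Equivalently, one can observe that the accumulated multiplier on $\delta$ is the geometric sum $1 + e^\eps + \cdots + e^{(k-1)\eps}$.)

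I do not expect a genuine obstacle here; the argument is a routine hybrid/telescoping computation. The only points needing minor care are the bookkeeping of the geometric series and the degenerate case $\eps=0$, where $\tfrac{e^{k\eps}-1}{e^\eps-1}$ must be read as its limiting value $k$; one can either treat that case separately or simply carry the sum $\sum_{i=0}^{k-1}e^{i\eps}$ throughout the induction to avoid dividing by zero.
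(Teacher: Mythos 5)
Your proof is correct. The paper states this lemma without proof (it is a standard fact from the differential privacy literature), and your argument is the canonical one: a hybrid/telescoping induction along a chain of datasets each differing in one row, with the coefficient of $\delta$ accumulating as the geometric sum $\sum_{i=0}^{k-1} e^{i\eps} = \tfrac{e^{k\eps}-1}{e^\eps-1}$. Your side remarks about the $k=0$ convention and the $\eps=0$ limiting value $k$ are appropriate and handled correctly.
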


Group privacy will give guarantees for poisoning attacks that introduce multiple points.

\mypar{DP-SGD.}
The most prominent differentially private mechanism for training machine learning models is differentially private stochastic gradient descent (DP-SGD)~\cite{SongCS13,BassilyST14,Abadi+16}. DP-SGD makes two modifications to the standard SGD procedure: 
\begin{figure}
\vspace{-5mm}
\begin{algorithm}[H]
\SetAlgoLined
\KwData{Input: Clipping norm $C$, noise magnitude $\sigma$, iteration count $T$, batch size $b$, dataset $D$, initial model parameters $\theta_0$, learning rate $\eta$}
\For{$i\in[T]$}{
  $G=0$
  
  \For{$(x,y)\in$ batch of $b$ random elements of $D$}{
   $g=\nabla_{\theta}\ell(\theta_i; (x, y))$
   
   $G = G + b^{-1}g\cdot\min(1, C||g||_2^{-1})$
   
   }{
   $\theta_i=\theta_{i-1} - \eta (G+\mathcal{N}(0, (C\sigma)^2\mathbb{I}))$
  }
 }
 \Return $\theta_T$
 \caption{DP-SGD}
 \label{alg:dpsgd}
\end{algorithm}
\vspace{-5mm}
\end{figure}
clipping gradients to a fixed maximum norm $C$, and adding noise to gradients with standard deviation $\sigma C$, for a given $\sigma$, as shown in Algorithm~\ref{alg:dpsgd}. Given the hyperparameters -- clipping norm, noise magnitude, iteration count, and batch size -- one can analyze DP-SGD to conclude that it satisfies $(\eps,\delta)$-differential privacy for some parameters $\eps,\delta \geq 0$.


\subsection{Statistically Measuring Differential Privacy}
In this section we describe our main statistical procedure for obtaining lower bounds on the privacy parameter for a given algorithm $\mathcal{A}$, which functions differently from the membership inference attack used in prior work (\cite{shokri2017membership, jayaraman2019evaluating} and described in Appendix
\ref{app:mi}).
Here, we describe the procedure generally, in the case where $\delta=0$; in Appendix
\ref{app:deltage0},
we show how to adapt the procedure for $\delta>0$, and in Section~\ref{sec:poisoning}, we discuss how we instantiate it in our work. Given a learning algorithm $\mathcal{A}$, we construct two datasets $D_0$ and $D_1$ differing on $k$ rows, and some output set $\mathcal{O}$. We defer the discussion of constructing $D_0$, $D_1$, and $\mathcal{O}$ to Section~\ref{sec:poisoning}. We also wish to bound the probability that we incorrectly measure $\epslb$ by a small value $\alpha$. From Equation~(\ref{eq:group}), observe that by estimating the quantities $p_0 = \Pr[\mathcal{A}(D_0)\in\mathcal{O}]$ and $p_1 = \Pr[\mathcal{A}(D_1)\in\mathcal{O}]$, we can compute the largest $\epslb$ such that Equation~(\ref{eq:group}) holds. With $\delta=0$, this simplifies to $\epslb=\ln(p_0/p_1)$. This serves as an estimate of the leakage of the private algorithm, but requires estimating $p_0$ and $p_1$ accurately.

For an arbitrary algorithm, it's infeasible to compute $p_0, p_1$ precisely, so we rely on Monte Carlo estimation, by training some fixed $T$ number of times. However, this approach incurs statistical uncertainty, which we correct for by using Clopper Pearson confidence intervals~\cite{clopper1934use}. That is, to ensure that our estimate $\epslb$ holds with probability $>1-\alpha$, we find a Clopper Pearson lower bound for $p_1$ that holds with probability $1-\alpha/2$, and an upper bound for $p_0$ holding with probability $1-\alpha/2$. Qualitatively, we can be confident that our lower bound on privacy leakage $\epsilon'$ holds with probability $1-\alpha$. This procedure is outlined in Algorithm~\ref{alg:generic}, and we prove its correctness in Theorem~\ref{thm:generic}.

\begin{algorithm}[H]
\DontPrintSemicolon
\SetAlgoLined
\KwData{Algorithm $\mathcal{A}$, datasets $D_0$, $D_1$ at distance $k$, output set $\mathcal{O}$, trial count $T$, confidence level $\alpha$}
\vspace{2pt}
$\text{ct}_0 = 0, \text{ct}_1 = 0$

\For{$i\in[T]$}{
  \lIf{$\mathcal{A}(D_0)\in\mathcal{O}$}{
   $\text{ct}_0 = \text{ct}_0 + 1$
   }
  \lIf{$\mathcal{A}(D_1)\in\mathcal{O}$}{
   $\text{ct}_1 = \text{ct}_1 + 1$
   }
 }
 $\hat{p}_0 = \textsc{ClopperPearsonLower}(\text{ct}_0, T, \alpha/2)$
 
 $\hat{p}_1 = \textsc{ClopperPearsonUpper}( \text{ct}_1, T, \alpha/2)$
 
 \Return $\epslb = \ln(\hat{p}_0/\hat{p}_1)/k$
 \caption{Empirically Measuring $\eps$}
 \label{alg:generic}
\end{algorithm}

\begin{theorem} \label{thm:generic}
When provided with black box access to an algorithm $\mathcal{A}$, two datasets $D_0$ and $D_1$ differing on at most $k$ rows, an output set $\mathcal{O}$, a trial number $T$ and statistical confidence $\alpha$, if Algorithm~\ref{alg:generic} returns $\epslb$, then, with probability $1-\alpha$, $\mathcal{A}$ does not satisfy $\eps'$-DP for any $\eps' < \epslb$.
\end{theorem}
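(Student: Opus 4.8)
The plan is to prove the contrapositive conditionally on a high-probability ``good event'' that the two one-sided Clopper--Pearson bounds are valid. First I would fix the probabilistic model: in each of the $T$ iterations the algorithm independently draws a fresh execution of $\mathcal{A}(D_0)$ and of $\mathcal{A}(D_1)$ and records whether the output lands in $\mathcal{O}$, so $\mathrm{ct}_0 \sim \mathrm{Binomial}(T, p_0)$ and $\mathrm{ct}_1 \sim \mathrm{Binomial}(T, p_1)$, where $p_b = \Pr[\mathcal{A}(D_b) \in \mathcal{O}]$. By the defining coverage property of Clopper--Pearson intervals, $\hat p_0 = \textsc{ClopperPearsonLower}(\mathrm{ct}_0, T, \alpha/2)$ satisfies $\Pr[\hat p_0 > p_0] \le \alpha/2$ and $\hat p_1 = \textsc{ClopperPearsonUpper}(\mathrm{ct}_1, T, \alpha/2)$ satisfies $\Pr[\hat p_1 < p_1] \le \alpha/2$. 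A union bound over these two failure events shows that with probability at least $1 - \alpha$ we simultaneously have $\hat p_0 \le p_0$ and $p_1 \le \hat p_1$; I would note that this step needs only independence of the $T$ trials within each count, not any independence between $\mathrm{ct}_0$ and $\mathrm{ct}_1$.

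Now condition on this good event. Since the Clopper--Pearson upper endpoint is strictly positive even when $\mathrm{ct}_1 = 0$, we have $\hat p_1 > 0$, hence $\hat p_0/\hat p_1 \le p_0/p_1$ and therefore the returned value satisfies $\epslb = \ln(\hat p_0/\hat p_1)/k \le \ln(p_0/p_1)/k$ (interpreted in the extended reals if $\hat p_0 = 0$). Suppose, toward a contradiction, that $\mathcal{A}$ were $\eps'$-DP for some $\eps' < \epslb$. Applying group privacy (Equation~(\ref{eq:group}) with $\delta = 0$) to the pair $D_0, D_1$, which differ on at most $k$ rows, and to the set $\mathcal{O}$ gives $p_0 \le e^{k\eps'} p_1$, i.e.\ $\ln(p_0/p_1)/k \le \eps'$. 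Chaining the two inequalities yields $\epslb \le \eps'$, contradicting $\eps' < \epslb$. Hence on the good event $\mathcal{A}$ fails $\eps'$-DP for every $\eps' < \epslb$, which is exactly the stated claim with probability $1 - \alpha$.

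There is no deep obstacle here; the care is entirely in bookkeeping. The points I would be most careful about are: (i) using the Clopper--Pearson guarantees in their \emph{one-sided} form at level $\alpha/2$ each, and observing that the discreteness-induced conservativeness of the interval only improves coverage (it is $\ge 1 - \alpha/2$); (ii) getting every inequality to point the right way, so that shrinking $\hat p_0$ toward $p_0$ and growing $\hat p_1$ toward $p_1$ can only decrease $\epslb$; and (iii) the degenerate cases $p_1 = 0$ or $\mathrm{ct}_0 = 0$ — the former causes no division issue because $\hat p_1 > 0$ always, and in the latter $\epslb = -\infty$ makes the conclusion vacuously true.
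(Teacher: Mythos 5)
Your proposal is correct and follows essentially the same route as the paper's proof: one-sided Clopper--Pearson coverage at level $\alpha/2$ each, combined with the group-privacy bound $p_0/p_1 \le e^{k\eps}$ to rule out $\eps'$-DP for $\eps' < \epslb$. Your version merely adds bookkeeping the paper leaves implicit (the explicit union bound, the contrapositive framing, and the degenerate cases $\hat p_1 > 0$ and $\mathrm{ct}_0 = 0$), all of which is consistent with the paper's argument.
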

We stress that when we say $\eps_{\mathit{LB}}$ is a lower bound with probability $1-\alpha$, this is only over the randomness of the Monte Carlo sampling, and is not based on any modeling or assumptions.  We can always move our confidence closer to $1$ by taking $T$ larger.

\begin{proof}[Proof of Theorem~\ref{thm:generic}]
First, the guarantee of the Clopper-Pearson confidence intervals is that, with probability at least $1-\alpha$, $\hat{p}_0 \le p_0$ and $\hat{p}_1 \ge p_1$, which implies $p_0 / p_1 \geq \hat{p}_0 / \hat{p}_1$.  Second, if $\mathcal{A}$ is $\eps$-DP, then by group privacy we would have $p_0/p_1 \leq \exp(k\eps)$, meaning $\mathcal{A}$ is \emph{not} $\eps'$-DP for any $\eps' < \frac{1}{k} \log(p_0/p_1)$.  Combining the two statements, $\mathcal{A}$ is \emph{not} $\eps'$ for any $\eps' < \frac{1}{k} \log(\hat{p}_0/\hat{p}_1) = \eps_{\mathit{LB}}$.
\end{proof}

The $\epslb$ reported by Algorithm~\ref{alg:generic} has two fundamental upper bounds, the provable $\epsth$, and an upper bound, $\epsopt(T, \alpha)$, imposed by Monte Carlo estimation. The first upper bound is natural: if we run the algorithm on some $\mathcal{A}$ for which the $\eps$ we can prove is $\epsth=1$, then $\epslb\le \epsth=1$. To understand $\epsopt(T, \alpha)$, suppose we run 500 trials, and desire $\alpha=0.01$. The best possible performance is if we get perfect inference accuracy and $k=1$, where $\text{ct}_0=500$ and $\text{ct}_1=0$. The Clopper Pearson confidence interval produces $\hat{p}_0=0.989, \hat{p}_1=0.011$, which gives $\epslb=4.54/k=4.54$. Then, with $99\%$ probability, the true $\eps$ is at least $4.54$, and $\epsopt(T, \alpha)=4.54$.

We remark that the above procedure only demonstrates that $\mathcal{A}$ cannot be strictly better than $(\epslb,0)$-DP, but allows for it to be  $(\epslb/2, \delta)$-DP for very small $\delta$. However, in our work, $\hat{p}_0,\hat{p}_1$ turn out never to be too close to $0$, so these differences have little effect on our findings.  In Appendix
\ref{app:deltage0},
we formally discuss how to modify this algorithm for $(\eps,\delta)$-DP for $\delta > 0$. We also show when we can increase $\epslb$ by considering the maximum upper bounds of the original output set $\mathcal{O}$ and its complement $\mathcal{O}^C$.

\section{Poisoning Attacks}
\label{sec:poisoning}

\begin{figure}
    \centering
    \includegraphics[scale=.45]{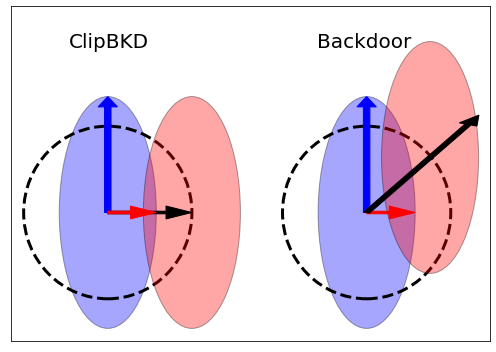}
    \caption{The distribution of gradients from an iteration of DP-SGD under a clean dataset (blue ellipse) and a poisoned dataset (red ellipse).  The right pair depicts traditional backdoors while the left pair depicts our backdoors.  Our attack pushes in the direction of least variance, so is impacted less by gradient clipping, which is indicated by the two distributions overlapping less.}
    \label{fig:clipbkd_diagram}
\end{figure}

We now show how to use poisoning attacks to run Algorithm~\ref{alg:generic}. Intuitively, we begin with a dataset $D_0$ and replace $k$ rows with poisoning points to form $D_1$; we then use the impact of poisoning as an output set $\mathcal{O}$. We start with existing backdoor attacks~\cite{gu2017badnets}, and then propose a more effective clipping-aware poisoning attack.

\subsection{Poisoning Background}
\label{ssec:pois_bg}
In a poisoning attack, an adversary replaces $k$ data points from a training dataset $D$ of $n$ points. The poisoned training dataset is provided as input to the training algorithm, which releases a model $f$ that minimizes a loss function $\mathcal{L}(f, D)$ on its given dataset $D$.


We focus on a specific type of poisoning attack, called a \emph{backdoor attack}~\cite{gu2017badnets}. In a backdoored model, the performance on natural data is maintained, but, by adding a small perturbation to a data point $x$ into $\mathit{Pert}(x)$, the adversary changes the predicted class of the perturbed data. These attacks have been developed for image datasets. 
In the original attack~\cite{gu2017badnets}, described in Algorithm~\ref{alg:backdoors}, the perturbation function $\mathit{Pert}(\cdot)$ adds a pattern in the corner of an image. The poisoning attack takes natural data $(x,y)$, perturbs the image to $\mathit{Pert}(x)$, and changes the class to some $y_p$. The objective is to decrease the loss on $(\mathit{Pert}(x), y_p)$ values from the perturbed test set.


\begin{algorithm}[t!]
\SetAlgoLined
\DontPrintSemicolon
\KwData{Dataset $X, Y$, poison size $k$, perturbation function $\mathit{Pert}$, target class $y_p$}
\SetKwFunction{Backdoor}{\textsc{Backdoor}}
\SetKwFunction{BackdoorTest}{\textsc{BackdoorTest}}

\SetKwProg{Fn}{Function}{:}{}
\Fn{\Backdoor{$X, Y, k, \mathit{Pert}, y_p$}}{
    $X_p =\textsc{GetRandomRows}(X, k)$\;
    $X'_p = \mathit{Pert}(X_p)$\;
    $X_{tr}^p = \textsc{ReplaceRandomRows}(X, X'_p)$\;
    $Y_{tr}^p = \textsc{ReplaceRandomRows}(Y, y_p)$\;
    \KwRet $D_0 = (X, Y), D_1 = (X_{tr}^p, Y_{tr}^p)$\;
}
\KwData{Model $f$, dataset $(X, Y)$, pert.~function $\mathit{Pert}$, target class $y_p$, loss function $\ell$, threshold $Z$}
\SetKwProg{Fn}{Function}{:}{}
\Fn{\BackdoorTest{$f, X, Y, \mathit{Pert}, y_p, \ell, Z$}}{
    $X_p = \mathit{Pert}(X)$\;
    \lIf{$\sum_{x_p\in X_p}\ell(f; (x_p, y_p)>Z$}{\Return Backdoored}
    \Return Not Backdoored
}
\caption{Baseline Backdoor Poisoning Attack and Test Statistic (Section~\ref{ssec:pois_bg})}
\label{alg:backdoors}
\end{algorithm}

\begin{algorithm}[t!]
\SetAlgoLined
\DontPrintSemicolon
\SetKwFunction{ClipBkd}{\textsc{ClipBkd}}
\SetKwFunction{ClipBkdTest}{\textsc{ClipBkdTest}}

\KwData{Dataset $X, Y$, pretrained model $f$, poison size $k$, dataset dimension $d$, norm $m$}
\SetKwProg{Fn}{Function}{:}{}

\Fn{\ClipBkd{$X, Y, k, f, m$}}{
    $U, D, V = \textit{SVD}(X)$\Comment{Singular value decomposition}
    
    $x_p = mV_d$ \Comment{$V_d$ is the singular vector for smallest singular value}
    
    $y_p = \argmin_i f(x_p)$ \Comment{Pick class maximizing gradient norm}
    
    $X_{tr}^p = \textsc{ReplaceRandomRows}(X, [x_p]*k)$ \Comment{Add poisoning point $k$ times}
    
    $Y_{tr}^p = \textsc{ReplaceRandomRows}(Y, [y_p]*k)$ \Comment{Add targeted class $k$ times}
    
    \KwRet $D_0 = (X, Y), D_1 = (X_{tr}^p, Y_{tr}^p)$
}
\KwData{Model $f$, Poison Data $x_p, y_p$, Threshold $Z$}
\SetKwProg{Fn}{Function}{:}{}

\Fn{\ClipBkdTest{$f, x_p, y_p, Z$}}{
    \lIf{$(f(x_p)-f(0^d))\cdot y_p>Z$}{\Return Backdoored}
    \Return Not Backdoored
}
\caption{Clipping-Aware Backdoor Poisoning Attack and Test Statistic (Section~\ref{ssec:clipbkd})}
\label{alg:bkdattack}
\end{algorithm}

\subsection{Clipping-Aware Poisoning}
\label{ssec:clipbkd}
DP-SGD makes two modifications to the learning process to preserve privacy: clipping gradients and adding noise.
Clipping provides no formal privacy on its own, but many poisoning attacks perform significantly worse in the presence of clipping.
Indeed, the basic backdoor attack from Section~\ref{ssec:pois_bg} results in a fairly weak lower bound of at most $\epslb=0.11$ using the Fashion-MNIST dataset, even with no added noise (which has an $\epsth=\infty$).
To improve this number, we must make the poisoning attack sufficiently robust to clipping.

To understand existing backdoor attacks' difficulty with clipping, consider clipping's impact on logistic regression. The gradient of model parameters $w$ with respect to a poisoning point $(x_p, y_p)$ is
$$\nabla_w \ell(w,b; x_p,y_p) = \ell'(w\cdot x_p + b, y_p)x_p.$$
Standard poisoning attacks, including the backdoor attack from Section~\ref{ssec:pois_bg}, focus on increasing $|\ell'(w\cdot x_p + b, y_p)|$; by doubling this quantity, if $|x_p|$ is fixed, half as many poisoning points are required for the same effect. However, in the presence of clipping, this relationship is broken.

To be more effective in the presence of clipping, the attack must produce not only large gradients, but \emph{distinguishable} gradients.  That is, the distribution of gradients arising from poisoned and cleaned data must be significantly different.  To analyze distinguishability, we consider the variance of gradients, illustrated in Figure~\ref{fig:clipbkd_diagram}, and seek a poisoning point $(x_p, y_p)$ minimizing $\mathit{Var}_{(x, y)\in D}\left[\nabla_w\ell(w, b; x_p, y_p)\cdot \nabla_w\ell(w, b; x, y)\right]$. This is dependent on the model parameters at a specific iteration of DP-SGD: we circumvent this issue by minimizing the following upper bound, which holds for all models (for logistic regression, $|\ell'(w\cdot x + b; y)| \le 1$):
$$\mathit{Var}_{(x, y)\in D}\left[\ell'(w\cdot x_p + b, y_p)x_p\cdot \ell'(w\cdot x + b, y)x\right]\le \mathit{Var}_{(x, y)\in D}\left[x_p\cdot x\right].$$

We can minimize this variance with respect to the poisoning point $x_p$ by using the singular value decomposition: selecting $x_p$ as the singular vector corresponding to the smallest singular value (i.e.~the direction of least variance), and scale $x_p$ to a similar norm to the rest of the dataset. We select $y_p$ to be the smallest probability class on $x_p$. We then insert $k$ copies of the poisoning point $(x_p,y_p)$.  We call this approach \textsc{ClipBKD}, detailed in Algorithm~\ref{alg:bkdattack}. We prove in Appendix 
\ref{app:clipbkdtheory}
that when we run \textsc{ClipBKD} (modified for regression tasks) to estimate the privacy of the \emph{output perturbation} algorithm, we obtain $\epslb$ within a small factor of the upper bound $\epsth$, giving evidence that this attack is well suited to our application in differential privacy. In Appendix
\ref{app:pretrained},
we describe how to adapt \textsc{ClipBKD} to transfer learning from a pre-trained model.

For both standard and clipping-aware backdoors, we generate $D_0, D_1$ with a given poisoning size $k$, using functions \textsc{Backdoor} or \textsc{ClipBkd}, respectively. Then the test statistic is whether the backdoored points are distinguishable by a threshold on their loss (i.e., output set $\mathcal{O}$ is whether \textsc{BkdTest} or \textsc{ClipBkdTest} return ``Backdoored''). We first run an initial phase of $T$ trials  to find a good threshold $Z$ for the test functions. We then run another $T$ trials in Algorithm~\ref{alg:generic} to estimate $\hat{p_0}$ and $\hat{p_1}$ based on either the \textsc{BkdTest} or the \textsc{ClipBkdTest} test statistic.

\section{Experiments and Discussion}
\label{sec:exp}
\subsection{Experimental Setup}
We evaluate both membership inference (MI, as used by \cite{yeom2018privacy} and \cite{jayaraman2019evaluating} and described in Appendix
\ref{app:mi})
and our algorithms on three datasets: Fashion-MNIST (FMNIST), CIFAR10, and Purchase-100 (P100). For each dataset, we consider both a \textbf{logistic regression (LR)} model and a \textbf{two-layer feedforward neural network (FNN)}, trained with DP-SGD using various hyperparameters:

\textbf{FMNIST}~\cite{xiao2017/online} is a dataset of 70000 28x28 pixel images of clothing from one of 10 classes, split into a train set of 60000 images and a test set of 10000 images. It is a standard benchmark dataset for differentially private machine learning. To improve training speed, we consider a simplified version, using only classes 0 and 1 (T-shirt and trouser), and downsample so each class contains 3000 training and 1000 testing points.
\textbf{CIFAR10}~\cite{Krizhevsky09learningmultiple} is a harder dataset than FMNIST, consisting of 60000 32x32x3 images of vehicles and animals, split into a train set of 50000 and a test set of 10000. For training speed, we again take only class 0 and 1 (airplane and automobile), making our train set contain 10000 samples, and the test set 2000 samples. When training on CIFAR10, we follow standard practice for differential privacy and fine-tune the last layer of a model pretrained nonprivately on the more complex CIFAR100, a similarly sized but more complex benchmark dataset~\cite{papernot2020making}.
\textbf{P100}~\cite{shokri2017membership} is a modification of a Kaggle dataset~\cite{Purchase100}, with 200000 records of 100 features, and 100 classes. The features are purchases, and the classes are user clusters. Following \cite{jayaraman2019evaluating}, we subsample the dataset so it has 10000 train records and 10000 test records.

Our techniques are general, and could be applied to any dataset-model pair to identify privacy risks for DP-SGD.  Examining these six dataset-model pairs demonstrates that our technique can be used to identify new privacy risks in DP-SGD, and a comprehensive empirical study is not our focus. 

\subsubsection{Implementation Details.}

\mypar{Model Size.} The two-layer feedforward neural networks all have a width of 32 neurons. For CIFAR10, the logistic regression model and feedforward neural network are added on top of the pretrained convolutional neural network.

\mypar{Computing Thresholds.} In order to run Algorithm~\ref{alg:generic}, we need to specify $D_0, D_1$ and $\mathcal{O}$. We've described how to use poisoning to compute $D_0, D_1$, and how the test statistics for these attacks are constructed, assuming a known threshold. To produce this threshold, we train 500 models on the unpoisoned dataset and 500 models on the poisoned dataset, and identify which of the resulting 1000 thresholds produces the best $\epslb$, using Algorithm~\ref{alg:generic}.

\mypar{Training Details.} We discuss the details of training in Table~\ref{tab:trainingdetails}. We selected these values to ensure a good tradeoff between accuracy and $\eps$, and selecting $\ell_2$ regularization for P100 based on \cite{jayaraman2019evaluating}.

\begin{table}[]
    \centering
    \begin{tabular}{|c|cccc|}
        \hline
        Dataset & Epochs & Learning Rate & Batch Size & $\ell_2$ Regularization \\
        \hline
        FMNIST & 24 & 0.15 & 250 & 0 \\
        CIFAR10 & 20 & 0.8 & 500 & 0 \\
        P100 & 100 & 2 & 250 & $10^{-4}$ / $10^{-5}$ \\
        \hline
    \end{tabular}
    \caption{Training details for experiments in Section~\ref{sec:exp}. P100 regularization is $10^{-5}$ for logistic regression and $10^{-4}$ for neural networks, following \cite{jayaraman2019evaluating}.}
    \label{tab:trainingdetails}
\end{table}

\subsection{Results and Discussion}
\begin{figure}[t]
    \centering
\begin{subfigure}{0.33\textwidth}
  \includegraphics[width=\linewidth]{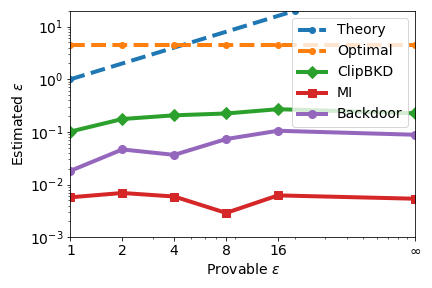}
  \caption{FMNIST, LR}
  \label{fig:fmnist_lr}
\end{subfigure}\hfil
\begin{subfigure}{0.33\textwidth}
  \includegraphics[width=\linewidth]{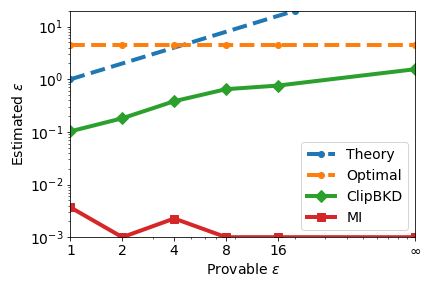}
  \caption{CIFAR10, LR}
  \label{fig:cifar_lr}
\end{subfigure}\hfil
\begin{subfigure}{0.33\textwidth}
  \includegraphics[width=\linewidth]{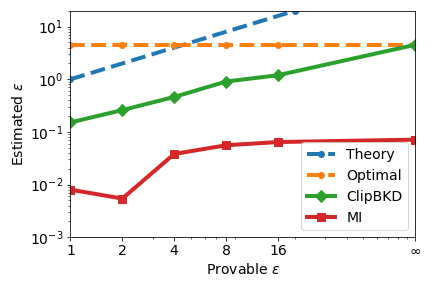}
  \caption{P100, LR}
  \label{fig:p100_lr}
\end{subfigure}

\medskip
\begin{subfigure}{0.33\textwidth}
  \includegraphics[width=\linewidth]{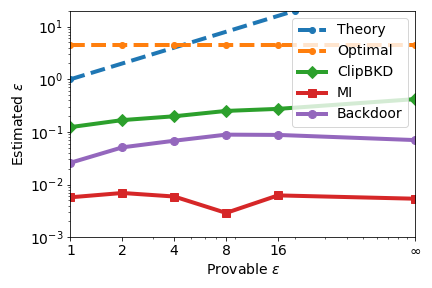}
  \caption{FMNIST, FNN}
  \label{fig:fmnist_2f}
\end{subfigure}\hfil
\begin{subfigure}{0.33\textwidth}
  \includegraphics[width=\linewidth]{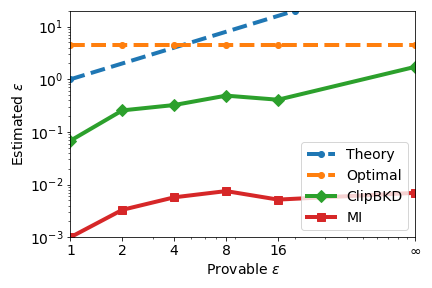}
  \caption{CIFAR10, FNN}
  \label{fig:cifar_2f}
\end{subfigure}\hfil
\begin{subfigure}{0.33\textwidth}
  \includegraphics[width=\linewidth]{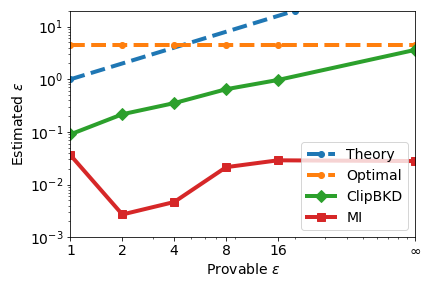}
  \caption{P100, FNN}
  \label{fig:p100_2f}
\end{subfigure}
\caption{Performance of privacy attacks MI, Backdoor, and ClipBKD on our datasets. LR = logistic regression, FNN = two-layer neural network. Backdoor attacks have not been developed for Purchase-100, so only MI and Clip-BKD were run. Backdoors do not provide positive $\epslb$ on CIFAR10 due to difficulty with the pretrained model.}
\label{fig:images}
\end{figure}


Figure~\ref{fig:images} presents a direct comparison of the privacy bounds produced by ClipBKD (our attack), the standard backdoor attack, and MI.  As standard backdoor attacks only exist for images, we only report results on them on FMNIST and CIFAR10. The pattern we choose for backdoor attacks is a 5x5 white square in the top-left corner of an image. For ClipBKD, we use $T = 500$ trials and confidence level $\alpha = 0.01$ (i.e., our Monte Carlo estimates hold with 99\% confidence) and report the best result from $k=1,2,4,8$ poisoning points.  Results for MI use 1000 samples, and average over 10 trained models.  For context, we display the best theoretical upper bound on $\eps_{\mathit{th}}$ and also $\eps_{\mathit{OPT}}(500, 0.01)$, which is the best value of $\eps_{\mathit{LB}}$ that we could hope to produce using $T$ trials and confidence level $\alpha$.

For every dataset and model, we find that ClipBKD significantly outperforms MI, by a factor of 2.5x--1500x. As a representative example, for $\eps_{\mathit{th}}=4$ on Purchase-100 with 2-layer neural networks, ClipBKD gives an $\eps_{\mathit{LB}}$ of 0.46, while MI gives $\eps_{\mathit{LB}}$ of 0.04, an improvement of 12.1x. We also find ClipBKD always improves over standard backdoors: on FMNIST by an average factor of 3.84x, and standard backdoors never reach positive $\epslb$ on CIFAR, due to the large number of points required to poison the pretrained model. We also find that ClipBKD returns $\epslb$ that are close to $\epsth$; for finite $\epsth$, the majority of gaps are a factor of $<12.3$x, reaching as low as 6.6x. For example, on Purchase-100, when $\epsth=4$, we find that ClipBKD returns an $\epslb$ of 0.46, a gap of 8.7x. 

\mypar{Sensitivity to Hyperparameters.}
We also give a more thorough evaluation of ClipBKD's performance as a function of DP-SGD's hyperparameters. We vary clipping norm between 0.5, 1, and 2 and vary the noise to ensure $\eps_{\mathit{th}}$ between 1, 2, 4, 8, 16, and $\infty$.  We also vary the initialization randomness between random normal initializations with variance $0$ (fixed initialization), $0.5\sigma$, $\sigma$, and $2\sigma$, where $\sigma$ is the variance of Glorot normal initialization.  Table~\ref{tab:nci} reports the best $\eps_{\mathit{LB}}$ produced by the attack over $k=1,2,4,8$.  Our best measured values of $\eps_{\mathit{LB}}$ occur when initialization is fixed, and are within a 4.2-7.7x factor of $\eps_{\mathit{th}}$, speaking to the effectiveness of ClipBKD. When $\epsth=\infty$ and the initialization is fixed, we achieve perfect inference accuracy, matching $\eps_{OPT}(500, 0.01)=4.54$.


These experiments reveal three intuitive trends. First, as $\epsth$ increases (equivalently, the noise level decreases), $\epslb$ also increases.  Second, as the initialization randomness decreases, $\epslb$ increases.  All existing analyses of DP-SGD give privacy upper bounds for \emph{any fixed initialization}, and our results suggest that initial randomization might play a significant role.  Finally, as clipping norm decreases, $\epslb$ decreases, except when the initialization is fixed. In fact, our results show that $\epslb$ is more sensitive to the clipping norm than the amount of noise.  All existing analyses of DP-SGD consider only the \emph{noise multiplier} $\sigma_{\mathit{GD}}$ but not the clipping norm, but the role of the clipping norm itself seems highly significant.  

We emphasize that for \emph{every} choice of hyperparameters, the training accuracy is 96--98\%, so the algorithm has comparable utility, but potentially very different privacy and robustness to poisoning, as we vary these parameters.  We believe these phenomena deserve further study.

\begin{table*}[t]
    \small
    \centering
    \begin{tabular}{c|cccc}
        \hline
        Params & Fixed Init & Init Rand = $0.5\sigma$ & Init Rand = $\sigma$ & Init Rand = $2\sigma$ \\
        \hline
        $\eps_{\mathit{th}}=1, \sigma_{\mathit{GD}}=5.02$      & 0.13 / 0.15 / 0.13 & 0.13 / 0.17 / 0.13 & 0.06 / 0.12 / 0.09 & 0.01 / 0.06 / 0.08 \\
        $\eps_{\mathit{th}}=2, \sigma_{\mathit{GD}}=2.68$      & 0.33 / 0.37 / 0.28 & 0.27 / 0.33 / 0.39 & 0.10 / 0.17 / 0.27 & 0.01 / 0.06 / 0.17 \\
        $\eps_{\mathit{th}}=4, \sigma_{\mathit{GD}}=1.55$      & 0.89 / 0.75 / 0.71 & 0.28 / 0.52 / 0.78 & 0.08 / 0.20 / 0.54 & 0.02 / 0.10 / 0.18 \\
        $\eps_{\mathit{th}}=8, \sigma_{\mathit{GD}}=1.01$      & 1.61 / 1.85 / 1.90 & 0.33 / 0.55 / 1.27 & 0.07 / 0.25 / 0.53 & 0.01 / 0.05 / 0.20 \\
        $\eps_{\mathit{th}}=16, \sigma_{\mathit{GD}}=0.73$     & 2.15 / 2.16 / 2.43 & 0.36 / 0.80 / 1.39 & 0.13 / 0.27 / 0.72 & 0.02 / 0.08 / 0.16  \\
        $\eps_{\mathit{th}}=\infty, \sigma_{\mathit{GD}}=0$    & 4.54 / 4.54 / 4.54 & 0.29 / 0.95 / 2.36 & 0.10 / 0.42 / 0.79 & 0.03 / 0.09 / 0.27 \\
    \end{tabular}
    \caption{Lower bound $\epslb$ measured with \textsc{ClipBKD} for clipping norms of (0.5 / 1 / 2) for two-layer neural networks trained on FMNIST. Training accuracy for all models is 96\%-98\%. Results are the maximum over $k=1, 2, 4, 8$. $\sigma_{\mathit{GD}}$ refers to the DP-SGD noise multiplier, while $\sigma$ is Glorot initialization randomness~\cite{glorot2010understanding}.  All reported values of $\epslb$ are valid with 99\% confidence over the randomness of our experiments.}
    \label{tab:nci}
\end{table*}

We present the same experiment in Table~\ref{tab:p100_nci}, run on the P100 dataset. In P100, we use $\ell_2$ regularization, a higher learning rate, and more epochs, making the contribution of the initialization smaller. As such, we find the role of both clipping norm and random initialization to be diminished. As a result, we manage to achieve $\epsopt(500, 0.01)$ without a fixed initialization.
\begin{table*}[t]
    \small
    \centering
    \begin{tabular}{c|ccc}
        \hline
        Params & Init Rand = $0.5\sigma$ & Init Rand = $\sigma$ & Init Rand = $2\sigma$ \\
        \hline
        $\eps_{\mathit{th}}=1, \sigma_{\mathit{GD}}=7.78$      & 0.09 / 0.01 / 0.00 & 0.05 / 0.00 / 0.00 & 0.07 / 0.05 / 0.00  \\
        $\eps_{\mathit{th}}=2, \sigma_{\mathit{GD}}=4.04$      & 0.16 / 0.27 / 0.11 & 0.21 / 0.17 / 0.03 & 0.20 / 0.10 / 0.02 \\
        $\eps_{\mathit{th}}=4, \sigma_{\mathit{GD}}=2.20$      & 0.38 / 0.33 / 0.30 & 0.29 / 0.35 / 0.30 & 0.34 / 0.33 / 0.13  
 \\
        $\eps_{\mathit{th}}=8, \sigma_{\mathit{GD}}=1.31$      & 0.52 / 0.53 / 0.42 & 0.54 / 0.53 / 0.52 & 0.56 / 0.46 / 0.50 \\
        $\eps_{\mathit{th}}=16, \sigma_{\mathit{GD}}=0.89$     & 0.80 / 0.77 / 0.71 & 0.63 / 0.77 / 0.76 & 0.74 / 0.70 / 0.72 
  \\
        $\eps_{\mathit{th}}=\infty, \sigma_{\mathit{GD}}=0$    & 2.73 / 4.53 / 4.54 & 1.52 / 3.08 / 4.52 & 0.90 / 1.91 / 2.79  \\
    \end{tabular}
    \caption{Lower bound $\epslb$ measured with \textsc{ClipBKD} for clipping norms of (0.5 / 1 / 2) for two-layer neural networks trained on P100. Results are the maximum over $k=1, 2, 4, 8$. $\sigma_{\mathit{GD}}$ refers to the DP-SGD noise multiplier, while $\sigma$ is Glorot initialization randomness~\cite{glorot2010understanding}. We do not run experiments with fixed initialization as we already achieve $\epsopt(500, 0.01)$ with initialization of $0.5\sigma$. All reported values of $\epslb$ are valid with 99\% confidence over the randomness of our experiments.}
    \label{tab:p100_nci}
\end{table*}

\section{Conclusion and Future Directions}
We use novel poisoning attacks to establish strong limits on the privacy of specific differentially private algorithms, namely DP-SGD.  We establish that the worst-case privacy bounds for this algorithm are approaching their limits.  Our findings highlight several questions for future exploration: 
\begin{itemize}
    \item How much can our attacks be pushed quantitatively?  Can the gap between our lower bounds and the worst-case upper bounds be closed?
    \item Can we incorporate additional features into the privacy analysis of DP-SGD, such as the specific gradient-clipping norm, and the amount of initial randomness?
    \item How realistic are the instances produced by our attacks, and can we extend the attacks to give easily interpretable examples of privacy risks for non-experts?
\end{itemize}

Although there is no hope of determining the precise privacy level of a given algorithm in a fully empirical way, we believe our work demonstrates how a quantitative, empirical approach to privacy attacks can effectively complement analytical work on privacy in machine learning.

\section*{Acknowledgments}
JU is supported by NSF grants CCF-1750640, CNS-1816028, and CNS-1916020, and a Google Faculty Research Award.  This research was  also sponsored by a Toyota ITC research award and the U.S. Army Combat Capabilities Development
Command Army Research Laboratory  under Cooperative Agreement Number W911NF-13-2-0045 (ARL Cyber Security CRA). The views and conclusions contained in this document are those of the authors and
should not be interpreted as representing the official policies, either expressed or implied, of the Combat Capabilities Development Command Army Research Laboratory or the U.S. Government. The U.S. Government is authorized to reproduce and distribute reprints for Government purposes notwithstanding any
copyright notation here on.

\bibliographystyle{alpha}
\bibliography{references-matthew,references-jon}

\appendix
\section{Extending Algorithm~\ref{alg:generic}.}
\label{app:deltage0}
\paragraph{Measuring $\eps$ when $\delta>0$.}
Notice that Algorithm~\ref{alg:generic} holds for $(\eps, 0)$-differential privacy.
However, this is only for simplicity---the group privacy guarantee of $(\eps, \delta)$-differential privacy can be converted to a similar procedure.
Write $x=\exp(\eps)$ in the group privacy guarantee for Equation~\ref{eq:group}, and rearrange to provide $p_1x^{k+1}-(p_1-\delta)x^k - p_0x + (p_0-\delta) \ge 0$. We can solve for $x$ here using a root-finding algorithm to find $x$, and computing $\eps_{\mathit{LB}}=\ln(x)$. Theorem~\ref{thm:generic} can be easily extended to this case. 

\paragraph{Measuring $\eps_{\mathit{LB}}$ with both $\mathcal{O}$ and $\mathcal{O}^C$.}
Notice that differential privacy makes a guarantee for all output sets, including the complement $\mathcal{O}^C$; if $\Pr[A(D)\in\mathcal{O}]=p$, then $\Pr[A(D)\in\mathcal{O}^C]=1-p$. If, upon computing $p_0, p_1$, we can compute a larger $\eps_{\mathit{LB}}$ by using $\mathcal{O}^C$, this requires no extra trials.

For example, suppose $\delta=0$, $k=1$, $p_1=0.8$, and $p_0=0.4$. Here, $\eps_{LB}=\log(p_1/ p_0)/1=\log(2)$. If, instead, we replace $\mathcal{O}$ by $\mathcal{O}^C$, $\eps_{LB}=\log((1-p_0)/(1-p_1))/1=\log(0.6/0.2)=\log(3)$. In Lemma~\ref{thm:complement}, we show when this technique improves $\eps_{LB}$: when $p_1>p_0+k\delta$ and $p_0+p_1 > 1$. We use this modification in all of our experiments.

\begin{figure}[h]
    \centering
    \includegraphics[scale=.45]{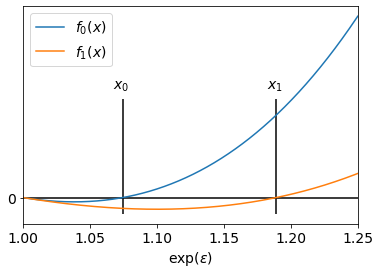}
    \caption{$f_0(x)$ and $f_1(x)$, as defined in Lemma~\ref{thm:complement} with $\delta=10^{-5}$, $k=4$, $p_0=0.6$, $p_1=0.8$.}
    \label{fig:complement}
\end{figure}
\begin{lemma}
If $p_1>p_0+k\delta$ and $p_0+p_1>1$, then the largest root of $f_0(x)=p_1x^{k+1}-(p_1-\delta)x^k - p_0x + (p_0-\delta)$ is smaller than the largest root of $f_1(x)=(1-p_0)x^{k+1}-(1-p_0-\delta)x^k - (1-p_1)x + (1-p_1-\delta)$.
\label{thm:complement}
\end{lemma}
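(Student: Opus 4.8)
The plan is to reduce both polynomials to degree $k$ and then compare their top roots via one short identity. Note first that $f_0(1)=f_1(1)=0$, so $f_0(x)=(x-1)\,h_0(x)$ and $f_1(x)=(x-1)\,h_1(x)$; carrying out the division — grouping $p_1x^{k+1}-(p_1-\delta)x^k=p_1x^k(x-1)+\delta x^k$ and $-p_0x+(p_0-\delta)=-p_0(x-1)-\delta$, and using $x^k-1=(x-1)(x^{k-1}+\dots+x+1)$ — gives
\[
h_0(x)=p_1x^k-p_0+\delta\,(x^{k-1}+\dots+x+1),\qquad
h_1(x)=(1-p_0)x^k-(1-p_1)+\delta\,(x^{k-1}+\dots+x+1).
\]
Each $h_i$ has nonnegative coefficients and positive leading coefficient, hence is strictly increasing on $[0,\infty)$; and since $h_0(1)=h_1(1)$, which the first hypothesis makes negative, each $h_i$ has a unique real root $r_i$, lying in $(1,\infty)$. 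Consequently the positive roots of $f_i$ are exactly $\{1,r_i\}$, so $r_i$ is the largest root of $f_i$, and the claim reduces to showing $r_0<r_1$.

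The crux is the identity
\[
h_0(x)-h_1(x)=(p_0+p_1-1)\,(x^k-1),
\]
obtained by subtracting the two displays above (the $\delta$-sums cancel); equivalently $f_0(x)-f_1(x)=(p_0+p_1-1)(x-1)(x^k-1)$. Evaluating at $x=r_1$: since $r_1>1$ we have $r_1^k-1>0$, and the second hypothesis $p_0+p_1>1$ gives $p_0+p_1-1>0$, so $h_0(r_1)-h_1(r_1)>0$, i.e.\ $h_0(r_1)>h_1(r_1)=0$. Because $h_0$ is strictly increasing with unique zero $r_0$, the inequality $h_0(r_1)>0$ forces $r_1>r_0$, as desired.

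The step I expect to need the most care is the claim in the first paragraph that $r_0$ and $r_1$ are the \emph{largest} roots of $f_0$ and $f_1$ — equivalently that each quotient's unique real root sits in $(1,\infty)$ rather than in $(0,1)$. This is exactly where the gap $|p_0-p_1|>k\delta$ is used, through the sign of $h_i(1)$, and it is also the point where, for $\delta>0$, one should link back to the group-privacy root-finding reformulation in Appendix~\ref{app:deltage0} to confirm that ``largest root'' is the quantity whose logarithm equals $\epslb$. Everything after that is just the two displayed identities together with monotonicity of $h_0$, so no further estimates are needed.
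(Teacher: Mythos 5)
Your route is essentially the paper's: both arguments rest on the identity $f_0(x)-f_1(x)=(p_0+p_1-1)(x-1)(x^k-1)$ together with the claim that each $f_i$ has a unique root above $1$ at which it is increasing, and your version of the second ingredient is actually cleaner --- dividing out the common factor $(x-1)$ and invoking monotonicity of the degree-$k$ quotient $h_i$ replaces the paper's somewhat informal analysis of the sign pattern of $f_0'$. However, there is one concrete error in your write-up: $h_0(1)=h_1(1)=p_1-p_0+k\delta$, and the stated hypothesis $p_1>p_0+k\delta$ makes this quantity \emph{positive}, not negative. Since $h_0$ is strictly increasing on $[0,\infty)$, positivity of $h_0(1)$ means $h_0$ has no root in $[1,\infty)$, so the largest root of $f_0$ (and likewise of $f_1$) is exactly $1$, and the strict inequality in the conclusion fails. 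So the step you yourself flagged as the delicate one is precisely where the argument breaks as written.

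The blame lies largely with the statement, which is internally inconsistent in the paper: the paper's own proof differentiates $p_0x^{k+1}-(p_0-\delta)x^k-p_1x+(p_1-\delta)$, i.e.\ the polynomial with the roles of $p_0$ and $p_1$ swapped relative to the displayed $f_0$, and for \emph{that} polynomial the condition $p_0+k\delta-p_1<0$ correctly yields a root above $1$. Your proof becomes correct --- and a tighter exposition than the paper's --- once you either read the hypothesis as $p_0>p_1+k\delta$ (which is the convention consistent with Algorithm~\ref{alg:generic}, where $\epslb=\ln(\hat{p}_0/\hat{p}_1)/k$ requires $p_0>p_1$) or swap $p_0\leftrightarrow p_1$ inside both polynomials; the difference identity is symmetric under this swap, so only the sign of $h_i(1)$ is affected. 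Two smaller points: $h_0$ does not literally have all nonnegative coefficients (its constant term is $\delta-p_0$), though strict monotonicity on $[0,\infty)$ still follows from the non-constant coefficients, which is all you use; and your parenthetical about linking ``largest root'' back to the $\epslb$ reformulation is not needed for the lemma itself, which is a purely algebraic statement about the two polynomials.
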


\begin{proof}
Write $x_0$ the largest root of $f_0(x)$, and $x_1$ the largest root of $f_1(x)$. We show this in two parts: first, we show that, for all $p_0, p_1$, $f_0(x)$ has a root $x>1$ when $p_1>p_0+k\delta$, after which it is monotonically increasing. This shape is evident in Figure~\ref{fig:complement}. This provides a nonzero $\eps_{LB}$. Then we show that this $\eps_{LB}$ must be smaller for $f_1(x)$ when $p_0+p_1>1$, because $f_0(x)-f_1(x)>0$ when $x>1$. This ensures that $f_1(x_0)<0$, and so the root $x_1>x_0$.

We begin by showing that $f_0(x)$ has a single root $x>1$. First, notice that $f_0(1)=0$. We then analyze the derivative, showing that it starts negative, has a root, and then is always positive. This indicates that there can only be one root.
\[
f_0'(x) = (k+1)x^kp_0 + kx^{k-1}(\delta-p_0) - p_1=kp_0x^{k-1}(x-1) + x^kp_0 + k\delta x^{k-1} - p_1
\]
This has a root $x>1$ if $x^kp_0 + k\delta x^{k-1} - p_1<0$, so we require $p_0+k\delta - p_1 < 0$. Notice too that $f_0'(x)$ is monotonically increasing at $x>1$. This ensures that it has only one root $x>1$. This argument holds, too, for $f_1(x)$, as if $p_0+k\delta - p_1<0$, then $(1-p_1)+k\delta - (1-p_0)<0$.

Now that we know both $f_0$ and $f_1$ only have a single root, and they are both increasing at that root, we just need to show that $f_1(x_0)<0$, as this will ensure $x_1>x_0$. We do this by showing that $\forall x>1$, $f_0(x)-f_1(x)>0$. First, write 
\[
f_0(x)-f_1(x)=x^{k+1}(p_0+p_1-1)+x^k(1-p_0-p_1)+x(1-p_0-p_1)+p_0+p_1-1.
\]
The $\delta$ terms cancel, and we can factor the above into
\[
f_0(x)-f_1(x)=(p_0+p_1-1)(x-1)(x^k-1).
\]
This is always positive when $x>1$ and $p_0+p_1>1$.
\qedhere
\end{proof}

\section{Analysis of Backdoor Poisoning-based Auditing}
\label{app:clipbkdtheory}
We now provide formal evidence for the effectiveness of backdoor poisoning attacks in auditing differentially private algorithms with a case study on linear regression. To the best of our knowledge, this is also the first formal analysis of backdoor poisoning attacks for a concrete learning algorithm.

\begin{theorem}
Given a dataset $X\in \mathbb{R}^{n\times d}, Y\in [-.5, .5]^n$, where each $x_i\in X$ satisfies $|x_i|_2\le 1$, consider output perturbation to compute ridge regression with regularization $\lambda$, satisfying $\eps, \delta$ differential privacy. Then Algorithm~\ref{alg:simplelinregbackdoor} produces a backdoor attack that produces a lower bound $$\epslb = \frac{\lambda \eps}{(1+\lambda + \sigma_d)\sqrt{\pi\ln(1.25/\delta)}} - 4\delta$$ where $\sigma_d$ is the smallest singular value of $X$.
\end{theorem}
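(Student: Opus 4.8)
The plan is to reduce the claim to a one-dimensional Gaussian distinguishing problem and then invoke the auditing guarantee of Algorithm~\ref{alg:generic} in its $\delta>0$ form (Appendix~\ref{app:deltage0}). Throughout, write $\hat w(X,Y)$ for the ridge estimator in closed form and recall that output perturbation is exactly the Gaussian mechanism: it releases $\tilde w = \hat w + \mathcal N(0,\xi^2 I)$ with $\xi = \Delta_2\sqrt{2\ln(1.25/\delta)}/\eps$, where $\Delta_2$ is the $\ell_2$-sensitivity of $\hat w$ to a one-row change. The datasets $D_0,D_1$ produced by Algorithm~\ref{alg:simplelinregbackdoor} differ by $k$ copies of a single poison point $(x_p,y_p)$ with $x_p = m v_d$, where $v_d$ is the bottom right-singular vector of $X$; the regression test statistic of Algorithm~\ref{alg:simplelinregbackdoor} is $y_p\,(\tilde w^\top x_p)$ (up to the cancelling bias term $f(0^d)$). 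Hence $\mathcal O$ is a half-line in $\tilde w^\top x_p$, and under $D_b$ the quantity $\tilde w^\top x_p$ is Gaussian with common variance $\xi^2 m^2$ and mean $\mu_b = \hat w_b^\top x_p$. So $p_0,p_1$ are explicit $\Phi$-values and $\epslb$ is governed entirely by the mean gap $\mu_1-\mu_0$ measured in units of $\xi m$.

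The heart of the argument is the computation of $\mu_1-\mu_0$, and this is where the choice of the least-variance direction pays off: $x_p$ is an eigenvector of $X^\top X$, hence of the regularized design matrix $A = X^\top X + \lambda I$, with the \emph{smallest} eigenvalue $\beta$, so $A^{-1}x_p = x_p/\beta$. Applying the Sherman--Morrison formula to the rank-one update $k\,x_p x_p^\top$ --- and noting that the $k$ rows removed to form $D_1$ contribute at most $\sigma_d^2$ to the $v_d$ direction, which can be made negligible by choosing which rows to replace --- every matrix expression collapses to scalars and yields
\[
\mu_1 - \mu_0 \;=\; \frac{k m^2\,\bigl(\beta\, y_p - x_p^\top X^\top Y\bigr)}{\beta\,(\beta + k m^2)} .
\]
I would then take $k=1$ (optimal in this regime, since $\mu_1^{(k)}-\mu_0$ is increasing and bounded in $k$ while $\epslb$ is proportional to $h(\gamma_k)/k$, which is decreasing), choose $y_p\in\{-\tfrac12,\tfrac12\}$ with sign opposite to the natural prediction $x_p^\top X^\top Y/\beta$ so the two terms add rather than cancel, and set $m$ of the order of the data radius.

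The main obstacle is to turn this exact but unwieldy expression into a clean lower bound of the form $|\mu_1-\mu_0| \ge \tfrac{\lambda}{1+\lambda+\sigma_d}\,m\,\Delta_2$. This needs two ingredients: (i) a matching lower bound on the signal $|\beta y_p - x_p^\top X^\top Y|$, i.e.\ controlling the "natural prediction" term so it does not swamp the $\beta y_p$ contribution --- exactly what the adversarial choice of $y_p$ buys; and (ii) the observation that the worst-case ridge sensitivity $\Delta_2$ is, up to constants, attained along this same smallest-eigenvalue direction, so that the attack's specific perturbation captures a constant fraction of $\Delta_2$. Combining these with $\beta \le 1+\lambda+\sigma_d$ under the stated normalization and $\Delta_2 = \Theta(1/\lambda)$, the factor $\tfrac{\lambda}{1+\lambda+\sigma_d}$ falls out. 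I expect this comparison between the attack's induced shift and the worst-case sensitivity to be the delicate part; the ridge sensitivity bound itself is standard.

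Finally I would plug in. With the midpoint threshold, $p_0 = \Phi(\gamma)$ and $p_1 = 1-\Phi(\gamma)$ for $\gamma = \tfrac{\mu_1-\mu_0}{2\xi m} = \tfrac{(\mu_1-\mu_0)\,\eps}{2 m \Delta_2\sqrt{2\ln(1.25/\delta)}}$. The map $h(\gamma) = \ln\tfrac{\Phi(\gamma)}{1-\Phi(\gamma)}$ is convex on $[0,\infty)$ with slope $\tfrac{4}{\sqrt{2\pi}} = \tfrac{2\sqrt2}{\sqrt\pi}$ at the origin, hence $h(\gamma) \ge \tfrac{2\sqrt2}{\sqrt\pi}\gamma$; incorporating the additive-$\delta$ correction from the $\delta>0$ group-privacy inequality of Appendix~\ref{app:deltage0} costs at most $4\delta$ since $\Phi(\gamma)\ge\tfrac12$. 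Thus $\epslb \ge \tfrac{2\sqrt2}{\sqrt\pi}\,\gamma - 4\delta$, and substituting the bound on $\gamma$ from the previous step gives $\epslb \ge \tfrac{\lambda\eps}{(1+\lambda+\sigma_d)\sqrt{\pi\ln(1.25/\delta)}} - 4\delta$, as claimed. What remains is routine: the explicit ridge sensitivity calculation under the normalization of Algorithm~\ref{alg:simplelinregbackdoor} and the elementary estimates on $\Phi$.
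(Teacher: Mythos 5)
Your overall architecture matches the paper's: reduce to distinguishing two one\hyphenation{dimensional}-dimensional Gaussians along $v_d$, threshold at the midpoint, linearize $\ln\frac{\Phi(\gamma)}{1-\Phi(\gamma)}$ near $\gamma=0$ (the paper does this via $\Pr[0<\mathcal{N}(0,\sigma^2)<c]\ge \frac{c}{\sigma\sqrt{2\pi}}e^{-c^2/2\sigma^2}$ and $\ln\frac{0.5+x}{0.5-x}\ge 4x$), and absorb the $\delta$ of the Gaussian mechanism into the additive $-4\delta$. That endgame is fine.

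The genuine gap is in the central computation, and it stems from misreading Algorithm~\ref{alg:simplelinregbackdoor}. That algorithm does \emph{not} produce $D_1$ by inserting $k$ scaled copies of a new point into $D_0$ (that is the classification attack of Algorithm~\ref{alg:bkdattack}). Here both datasets share the identical design matrix $X_p = (X\,\|\,v_d)$ and differ only in the \emph{label} of the appended row: $Y_0 = (Y\,\|\,y_p)$ versus $Y_1=(Y\,\|\,-y_p)$ with $y_p=0.5$. Consequently
\[
w_0-w_1=(\lambda I+X_p^{\top}X_p)^{-1}X_p^{\top}(Y_0-Y_1)=(\lambda I+X^{\top}X+v_dv_d^{\top})^{-1}v_d=\frac{v_d}{\lambda+\sigma_d+1},
\]
exactly and with no residual data-dependent term. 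In your version, the Sherman--Morrison expression $\mu_1-\mu_0=\frac{km^2(\beta y_p-x_p^{\top}X^{\top}Y)}{\beta(\beta+km^2)}$ carries the ``natural prediction'' term $x_p^{\top}X^{\top}Y$ and the perturbation from the $k$ replaced rows, both of which you leave uncontrolled (``can be made negligible by choosing which rows to replace,'' ``I expect this \ldots to be the delicate part''). Neither issue exists in the paper's construction, and your step (ii) --- arguing that the worst-case ridge sensitivity is attained along $v_d$ --- is likewise unnecessary: the paper simply uses the standard sensitivity $2/\lambda$ to set the noise scale and compares it to the exact mean gap $1/(1+\lambda+\sigma_d)$, which is where the factor $\lambda/(1+\lambda+\sigma_d)$ comes from. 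As written, your proof is incomplete at precisely the step that the correct reading of the attack renders trivial.
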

\jon{I think you want to use $\sigma_{\mathit{min}}$ for the least singular value.}
\jon{When you say it produces a backdoor attack that produces this lower bound, maybe you should clarify a bit?  You're saying that it leads to these $p_0,p_1$ values, which we then have to estimate.}

\begin{algorithm}[t]
\SetAlgoLined
\KwData{ Dataset $X\in \mathbb{R}^{n\times d}, Y\in [-.5, .5]$}
\KwResult{$D_c = (X_p, Y_p), D_p=(X_p, Y_p')$}

$U, D, V = \textit{SVD}(X)$\Comment{Singular value decomposition}

$x_p = V_d$

$y_p = .5$

\Return $(X\mid\mid x_p, Y\mid\mid y_p), (X\mid\mid x_p, Y\mid\mid -y_p)$
\caption{Clipping-Aware Poisoning Attack Generation}
 \label{alg:simplelinregbackdoor}
\end{algorithm}

\begin{proof}
We begin by computing the difference between the optimal linear regression parameters $w_0, w_1$ for the two datasets $D_0=(X\mid\mid x_p, Y\mid\mid y_p), D_1=(X\mid\mid x_p, Y\mid\mid -y_p)$, respectively. We refer to $X_p=(X\mid\mid x_p), Y_0=(Y\mid\mid y_p), Y_1=(Y\mid\mid -y_p)$. We continue to refer to the eigendecomposition of $X^T X=V D V^T$. Recall that the optimal parameters for an arbitrary dataset $X, Y$ with $\ell_2$ regularization $\lambda$ is $(\lambda I + X^T X)^{-1}X^T Y$.
\begin{align*}
w_0 - w_1 
={} &(\lambda I + X_p^T X_p)^{-1}X_p^T(Y_0 - Y_1) \\
={} &(\lambda I + X_p^T X_p)^{-1}x_p \\
={} &(\lambda I + VDV^T + v_dv_d^T)^{-1}v_d \\
={} &(\lambda V I V^T + V D V^T + V\text{diag}(e_d)V^T)^{-1}v_d \\
={} &V(\lambda I + D + \text{diag}(e_d))^{-1}V^Tv_d \\
={} &\frac{v_d}{\lambda + \sigma_d + 1}
\end{align*}

The output perturbation algorithm for Ridge regression, with $(\epsilon, \delta)$-DP, adds Gaussian noise with variance $\sigma^2 = 2\ln(1.25/\delta)(2/\lambda)^2/\epsilon^2$ to the optimal parameters $w$.

\jon{Use the standard notation $\mathcal{N}(\mu,\sigma^2)$.  Also you need to indicate that it's multivariate, usually with $\mathcal{N}(\mu,\sigma^2 \mathbb{I})$ or something.}
The optimal distinguisher for $w_0+\mathcal{N}(0,\sigma^2\mathbb{I})$ and $w_1+\mathcal{N}(0,\sigma^2\mathbb{I})$ is 
\[
f(w)=\mathbbm{1}[w\cdot v_d - 0.5(w_0+w_1)\cdot v_d < 0].
\]
Letting $c=\frac{0.5}{\lambda + \sigma_d + 1}$, the probability of success for this distinguisher is
$
\Pr[\mathcal{N}(0, \sigma^2) < c],
$
which gives an $\epsilon$ lower bound of
\[
\ln\left(\frac{\Pr[\mathcal{N}(0, \sigma^2) < c] - \delta}{\Pr[\mathcal{N}(0, \sigma^2) < - c]}\right).
\]
We can lower bound $\Pr[0<\mathcal{N}(0, \sigma^2)<c]$ using the following integral approximation:
\[
\Pr[0<\mathcal{N}(0, \sigma^2)<c]\ge \frac{c}{\sigma\sqrt{2\pi}}\exp(-c^2/2\sigma^2),
\] so our $\eps$ lower bound is
\[
\ln\left(\frac{0.5 - \delta + c/(\sigma\exp(c^2/2\sigma^2)\sqrt{2\pi})}{0.5 - c/(\sigma\exp(c^2/2\sigma^2)\sqrt{2\pi})}\right)\ge \ln\left(\frac{0.5 - \delta + c/(\sigma\exp(c^2/2\sigma^2)\sqrt{2\pi})}{0.5 + \delta - c/(\sigma\exp(c^2/2\sigma^2)\sqrt{2\pi})}\right).
\]
By its Maclaurin series, $\ln\left(\frac{0.5+x}{0.5-x}\right)\ge 4x$. Then we can compute our lower bound on $\epsilon$ to be
\[
\epslb \ge 4\left(\frac{c}{\sigma\exp(c^2/2\sigma^2)\sqrt{2\pi}} - \delta\right) =  O\left(\frac{\lambda\eps}{(1+\lambda+\sigma_d)\sqrt{\ln(1/\delta)}}\right)
\]
so the attack differs by a constant factor from the provable $\epsilon$. \jon{In the theorem you make this a firm lower bound but here you use $O(\cdot)$ notation.  Which is it?}\matthew{I'm using both here, it's just c is a bit complicated to write down so I'm using O() for that}

\end{proof}

\section{ClipBKD with Pretrained Models}
\label{app:pretrained}
State-of-the-art differentially private CIFAR10 models use transfer learning from a fixed pretrained CIFAR100 convolutional neural network. We call the pretrained model function $f_0$, which is never updated during training. Training produces a $f_1$, such that the entire prediction model is $f(x)=f_1(f_0(x))$.

This makes ClipBKD not directly applicable, as ClipBKD requires access to the input of the trained model $f_1$. Then we must try to produce some $x$ such that $f_0(x)=h_p$, where $h_p$ is produced by SVD in Algorithm~\ref{alg:backdoors}. This is not in general possible, so we instead use gradient descent to optimize the combination of two loss functions on $x$.

Our first loss function incentivizes decreasing $h_p\cdot v$ for high-variance directions $v\in V_{high}$ from SVD. This ensures the gradient will not move in SGD's noisy directions. Our second loss function incentivizes increasing $h_p\cdot v$ for low-variance directions $v\in V_{low}$ from SVD, ensuring the gradient is distinguishable in low variance directions. Putting these together, we produce $x_p$ by optimizing the following loss function:
\begin{equation}
\begin{split}
x_p=&\argmax_{x} \sum_{v\in V_{low}}(f_0(x)\cdot v)^2 - \sum_{v\in V_{high}}(f_0(x)\cdot v)^2 \\
&\text{s.t.} ~~ x\in [0, 1]^d
\end{split}
\end{equation}

We perform this optimization by projected gradient descent, running 10000 iterations with a learning rate of 1.

\section{Membership Inference}
\label{app:mi}
In membership inference~\cite{yeom2018privacy}, an adversary is given a model $f$ and its training loss $c$ and seeks to understand whether a given data point $(x, y)$ was used to train the model. Although alternative formulations have been proposed~\cite{shokri2017membership}, we focus on the one proposed by \cite{yeom2018privacy}. Intuitively, the attack relies on a generalization gap: the loss on training data should be smaller than the loss on test data. The algorithm is provided a set of $2n$ elements, $n$ of which were used for training, and $n$ not used for training, and predicts that any sample with loss lower than the training loss. The accuracy of these predictions is bounded by $\tfrac{\exp(\eps)}{1+\exp(\eps)}$ for any $\eps$-differentially private algorithm, so we can produce a lower bound $\epslb$ from it. The algorithm is provided in Algorithm~\ref{alg:mi}.

\begin{algorithm}[t]
\SetAlgoLined
\KwData{Training dataset $D_{tr}$ of size $n$, Test dataset $D_{t}$ of size $n$, Dataset, learning algorithm $\mathcal{A}$}
\vspace{3mm}
$f, c = \mathcal{A}(D_{tr})$  \Comment{$\mathcal{A}$ returns model and training loss}

$\text{correct\_ct} = 0$

\For{$(x,y)\in D_{tr}$}{
    \lIf{$\ell(f; x, y)<c$}{$\text{correct\_ct} = \text{correct\_ct}+1$\Comment{training set should have small loss}}
}
\For{$(x, y)\in D_{t}$}{
    \lIf{$\ell(f; x, y)>c$}{$\text{correct\_ct} = \text{correct\_ct}+1$\Comment{test set should have higher loss}}
}

$\text{Adv} = \text{correct\_ct}/2n$

$\epslb = \ln(\text{Adv}/(1-\text{Adv}))$
\caption{Membership Inference~\cite{yeom2018privacy}}
 \label{alg:mi}
\end{algorithm}

\end{document}